\newtheorem{definition}{Definition}
\newtheorem{example}{Example}
\newtheorem{lemma}{Lemma}
\newlist{mylist}{enumerate*}{1}
\setlist[mylist]{label=(\roman*)}
\def\@clipped@vdash{%
  \raise .6ex\hbox{\clipbox{0pt .6ex 0pt .6ex}{$\vdash$}}%
}
\newcommand*\vDdashA{%
  \mathrel{%
    \ooalign{%
      $\vdash$\cr
      \raise  .3ex\hbox{\@clipped@vdash}\cr
      \raise -.3ex\hbox{\@clipped@vdash}%
    }%
  }%
}
\newcommand{\logica}[1]{\mathrm{L}(#1)}
\newcommand{\simnot}{\mathord{\sim}}
\newcommand\mydiamonplus{\ensurestackMath{%
  \stackengine{.5pt}{\Diamond}{\scalebox{.75}[1]{$+$}}{O}{c}{F}{F}{L}}}
\newcommand\mydiamonminus{\ensurestackMath{%
  \stackengine{.5pt}{\Diamond}{\scalebox{.75}[1]{$-$}}{O}{c}{F}{F}{L}}}
\newcommand{\A}{\mathbf A}
 \newcommand{\lat}{\langle A,\sqcap,\sqcup, 1,0,\rightharpoonup \rangle}
\newcommand{\toh}{\rightharpoonup}
\newcommand{\Max}{\bigsqcup}
\newcommand{\Min}{\bigsqcap}
\newcommand{\nneg}{\sslash}
\newcommand{\G}{\textbf{\"G}}
\newcommand{\2}{\mathbf 2}
\newcommand{\Prop}{\textrm{Prop}}
\newcommand{\PTS}{\textrm{PTS}}
\newcommand{\Fm}{\textrm{Fm}}
\newenvironment{calculation}{\begin{eqnarray*}&&}{\end{eqnarray*}}
\newcommand{\just}[2]{\\ &#1& \rule{2em}{0pt} \{ \mbox{\rule[-.7em]{0pt}{1.8em} \footnotesize #2 \/} \} \nonumber\\ && }
\title{A Logic for Paraconsistent Transition Systems}
\author{Ana Cruz
\institute{INESC TEC, University of Minho, Portugal}
\and
Alexandre Madeira 
\institute{CIDMA, University of Aveiro, Portugal}
\and
Lu{\'\i}s Soares Barbosa
\institute{INESC TEC, University of Minho, Portugal}
}
\begin{document}
\maketitle

\begin{abstract}
Modelling complex information systems often entails the need for dealing with scenarios of inconsistency in which several requirements either reinforce or contradict each other. In this kind of scenarios, arising e.g. in knowledge representation,  simulation of biological systems, or 
quantum computation, inconsistency has to be addressed in a precise and controlled way. This paper generalises  
Belnap-Dunn four-valued logic,
introducing paraconsistent transition systems (PTS), endowed with positive and negative accessibility relations, and a metric space over the lattice of truth values,  and their modal logic.
\end{abstract}

\section{Introduction}\label{scintrod}

\paragraph{Context and motivation.}

Different sort of many-valued logics have been proposed  to reason about modelling contexts in which the classical bivalent distinction is not enough, in particular to capture vagueness or uncertainty. Residuated lattices, adding a commutative monoidal structure to a complete lattice such that the monoid composition has a right adjoint, the residue,  provide the  semantic universe for such logics. A suitable choice of the lattice carrier, which stands for the set of truth values, does the job --- a typical example being the real $[0,1]$ interval. Reference \cite{Bou} explores, in a systematic way, the modal extensions of many-valued logics whose Kripke frames are defined over (variants of) residuated lattices.

In a number of situations, however, this is not yet the whole picture. Often, indeed, there is also a need to deal simultaneously with what could be called \emph{positive} and \emph{negative} accessibility relations, one weighting the possibility of a transition to be present, the other weighting the possibility of being absent. In most real situations such weights are not complementary, and thus both relations should be incorporated in the Kripke frame. 
Suppose, for illustrative purposes, that weights for both transitions come from a residuated lattice over the real $[0,1]$ interval.
Then, the two accessibility relations jointly express a scenario of 
\begin{itemize}[noitemsep]
\item \emph{inconsistency}, when the positive and negative weights  are contradictory, i.e. they sum to some value greater than $1$
(cf, the upper triangle in Fig.~\ref{fig1} filled in grey).
\item \emph{vagueness}, when the sum is less than $1$ (cf, the lower, periwinkle  triangle in Fig.~\ref{fig1});
\item \emph{strict consistency}, when the sum is exactly $1$, which means that the measures of the factors enforcing or preventing a transition are complementary, corresponding to the red line in the figure.
\end{itemize}

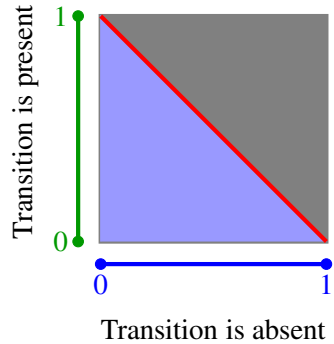
\begin{figure}[H]
\begin{center}
\begin{tikzpicture}
\draw[gray, ultra thick] (0,0) rectangle (3,3);

\draw[black!40!green, ultra thick] (-0.3,0) -- (-0.3,3) ;
\draw[blue, ultra thick] (0,-0.3) -- (3, -0.3);
\draw[] (-1,3.3) node[anchor=east, rotate=90] {Transition is present};
\draw[] (1.5,-0.9) node[anchor=north] {Transition is absent};
\filldraw[blue] (0,-.3) circle (2pt) node[anchor=north] {$0$};
\filldraw[blue] (3,-0.3) circle (2pt) node[anchor=north] {$1$};
\filldraw[black!40!green] (-0.3,0) circle (2pt) node[anchor=east] {$0$};
\filldraw[black!40!green] (-0.3,3) circle (2pt) node[anchor=east] {$1$};
\filldraw[gray] (3,0) -- (3,3) -- (0,3);
\filldraw[white!60!blue] (0,0) -- (0,3) -- (3,0);
\draw[red, ultra thick] (0,3) -- (3,0);
\end{tikzpicture}
\end{center}
\caption{The vagueness-inconsistency square.}
\label{fig1}
\end{figure}

Exploring the three regions in the picture motivated an incursion into paraconsistent logic \cite{js123,Carnielli2016}, whose purpose is to
regard inconsistent information as potentially informative. Such logics were 
originally developed in the decades of 1950 and 1960, mainly by F. Asenjo and Newton da Costa. Quickly, however,
the topic attracted attention in the international community  and the original scope of mathematical applications broadened  out, as witnessed in a recent book  emphasizing the engineering potential of paraconsistency \cite{Aka16}. 

A popular example of the sort of scenarios we want to capture is that of a robot with a sensor capturing the presence of a specific stimulus and another collecting evidence in the opposite directions.
Our own motivation for this work comes, however, from a different context, that of  quantum computation in the context of  NISQ (\emph{Noisy Intermediate-Scale Quantum}) technology \cite{Preskill18} in which levels of decoherence of quantum memory need to be articulated with the length of the circuits to assess program quality. The actual possibility of a transition, approached simultaneously, from the affirmative and the negative perspectives, i.e. the possibility of the system remaining coherent during the execution of a program step, or becoming unstable, respectively, provides a basis to analyse the impact of qubit decoherence in quantum circuits optimization. 

This example is discussed in the recent MSc dissertation of the first author \cite{Ana21}. This paper, however, focuses on the characterization of the relevant transition structures, which we call \emph{paraconsistent transition systems}, abbreviated to PTS in the sequel, and the modal logic for which they provide a semantics. An important remark is  that all constructions discussed in the sequel are parametric in a class of residuated lattices, thus admitting different instances according to the structure of the truth values domain that better suits each modelling context.  
Moreover, the semantics is endowed with a metric over the lattice carrier to characterise the `grey-periwinkle square´  above, in a general setting. 

\paragraph{A tribute to J. Dunn.}

In a sense this paper aims at generalising the Belnap-Dunn four-valued logic \cite{10.1093}, allowing transition weights to be taken from a larger class of residuated lattices. Recall that in the latter propositions are interpreted in a bilattice over  $\{T,F\}$, typically denoted by $\mathcal{FOUR}$. Bilattice elements are pairs $(a,b)$ such that $a, b\in \{T,F\}$. Thus, the valuation of propositions assigns one out of four possible truth values: $\{(T,F),(F,T),(T,T),(F,F)\}$, understood as representing not only the classical Boolean values, but \emph{excess} of information (i.e. inconsistency, $(T,T)$), and  its \emph{insufficiency} (i.e. vagueness, $(F,F)$), respectively. 
The bilattice $\mathcal{FOUR}$ expresses  both a 
 truth and an information ordering, $\leq_t$ and $\leq_i$, each forming a complete lattice as depicted below. 
 Formally, 

$$(a,b)\leq_t (c,d) \text{ iff } a\leq c \text{ and } b\geq d \; \; \; \text{and}\; \; \;  (a,b)\leq_i (c,d) \text{ iff } a\leq c \text{ and } b\leq d $$


\begin{center}
\begin{tikzpicture}[scale=.7]
  \node (one) at (0,2) {$(T,F)$};
  \node (a) at (-2,0) {$(T,T)$};
  \node (d) at (2,0) {$(F,F)$};
  \node (zero) at (0,-2) {$(F,T)$};
  \node (x) at (-2,-2) {$\leq_t$};
  \draw (zero) -- (a) -- (one) -- (d) -- (zero);
\end{tikzpicture}
\qquad 
\begin{tikzpicture}[scale=.7]
  \node (a) at (2,0) {$(T,F)$};
  \node (one) at (0,2) {$(T,T)$};
  \node (zero) at (0,-2) {$(F,F)$};
  \node (d) at (-2,0) {$(F,T)$};
  \node (x) at (-2,-2) {$\leq_i$};
  \draw (zero) -- (a) -- (one) -- (d) -- (zero);
\end{tikzpicture}
\end{center}


In the modal version of Belnap-Dunn logic,  Kripke frames are built around a four-valued accessibility relation, incorporating in the transitions the notions of inconsistency and vagueness, already considered at the propositional level. 

This paper follows a similar path,  in a more general setting. Actually,  the definition of a bilattice over $\{T,F\}$  does not generalize well to other domains. 
We discuss such a generalization from a different perspective.
Section \ref{scPTS} introduces paraconsistent transition systems (PTS) and reviews the underlying semantic structures. 
The corresponding modal logic is discussed in \ref{sclogic}. Section \ref{secm} characterises bisimilarity for PTS and proves an invariance result.

\section{$\A$-paraconsistent transition structures} \label{scPTS}

As mentioned above, a paraconsistent transition system will be defined over a set of states through a \emph{positive} and a \emph{negative} accessibility relation, taking weights from a common universe. Equivalently, a transition between two states can be regarded as weighted by a pair of values. In a similar way, the valuation of a proposition in a state will be a pair of values capturing, respectively, the degree upon which it may be considered to hold, and to fail. Note that pairs of weights labelling a transition or evaluating a proposition are not necessarily complementary.

This entails the need for making precise i) where weights come from; ii) how is a 'distance' between two such weights computed, for example to assess the level of vagueness or inconsistency in a given transition, and, finally, iii) how pairs of weights are composed. The following subsections address these issues. Then, the definition of a paraconsistent transition system emerges as expected.  

\subsection{Weights}

As usual in multi-valued logics, residuated lattices provide the underlying semantic structure. 
 
A residuated lattice $\langle  A, \sqcap,\sqcup,1,0,\odot ,\rightharpoonup, e\rangle$ over a set $A$ is a complete lattice
$\langle A, \sqcap,\sqcup,1,0\rangle$, equipped with a monoid  $\langle A,\odot, e\rangle$  such that $\odot$ has a right adjoint,
$\rightharpoonup$, called the residuum. We will, however, focus on a particular class of residuated lattices in which the lattice meet ($\sqcap$) and monoidal composition ($\odot$) coincide. Thus the adjunction is stated as
$$a\sqcap b\leq c \; \; \; \; \text{iff}\; \; \; \;  b\leq a\rightharpoonup c$$
Additionally, we will enforce a prelinearity condition
 \begin{equation}\label{prelin}
    (a\rightharpoonup b)\sqcup (b\rightharpoonup a) =1
\end{equation}

A residuated lattice obeying prelinearity  is known as a MTL-algebra in the literature\footnote{From \emph{monoidal t-norm} based logic.} \cite{EstevaFrancesc}.
With a slight abuse of nomenclature, the designation iMTL-algebra, from \emph{integral MTL-algebra}, will be used in the sequel  for the class of semantic structures considered, i.e. prelinear, residuated lattices such that $\sqcap$ and $\odot$ coincide.

This structure enjoys a number of properties. Some are direct consequences of the adjunction, namely
\begin{align}
\label{eq25} \text{ if }\; \; a\leq b \text{ then } (c\rightharpoonup a)\leq (c\rightharpoonup b) \\
\label{eq26}\text{ if } a\leq b \text{ then } (b\rightharpoonup c)\leq (a\rightharpoonup c)
 \end{align}
 and
 \begin{equation}
 a \rightharpoonup (b\rightharpoonup c) = (b\sqcap a) \rightharpoonup c
 \end{equation}




The interplay between the residuum and (the distributed versions of) of $\sqcap$ and $\sqcup$ is captured in the following lemma.

\begin{lemma}\label{lmaux} Let ${\mathbf A} = \langle A,\sqcap,\sqcup,1,0,\rightharpoonup\rangle$ be an  iMTL-algebra. Then, for any $a_1,\dots,a_n,b\in A$
 \begin{align} 
 \label{eq1lm1}
  b \rightharpoonup \big(\bigsqcap_{i} a_i\big) = \bigsqcap_{i}( b \rightharpoonup a_i)\\
\label{eq2lm1} \big(\bigsqcup_{i} a_i\big)\rightharpoonup b = \bigsqcap_{i}(a_i \rightharpoonup b) \\
b\rightharpoonup\big (\bigsqcup_i a_i\big ) = \bigsqcup_i \big ( b \rightharpoonup a_i\big ) \label{eq3lm1} \\ \label{eq4lm1}
\big ( \bigsqcap_i a_i \big ) \rightharpoonup b = \bigsqcup_i \big ( a_i \rightharpoonup b \big)\end{align}
\end{lemma}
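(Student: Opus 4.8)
The plan is to separate the four identities into those that follow from the residuation adjunction alone --- \eqref{eq1lm1} and \eqref{eq2lm1}, which hold in every residuated lattice --- and those, \eqref{eq3lm1} and \eqref{eq4lm1}, that genuinely invoke prelinearity \eqref{prelin}. As a preliminary I would record the distributivity $a\sqcap\bigsqcup_i b_i=\bigsqcup_i(a\sqcap b_i)$, which holds here because $a\sqcap(-)=a\odot(-)$ has a right adjoint and hence preserves joins; it is used repeatedly in the second group. For \eqref{eq1lm1} I would note that $b\rightharpoonup(-)$, being right adjoint to $b\sqcap(-)$, preserves all meets, or argue by antisymmetry from the chain: $c\leq b\rightharpoonup\bigsqcap_i a_i$ iff $b\sqcap c\leq a_i$ for all $i$ iff $c\leq b\rightharpoonup a_i$ for all $i$ iff $c\leq\bigsqcap_i(b\rightharpoonup a_i)$. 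For \eqref{eq2lm1} the equivalence $x\sqcap y\leq b\Leftrightarrow y\leq x\rightharpoonup b$ exhibits $x\mapsto x\rightharpoonup b$ as half of an antitone Galois connection of $\langle A,\leq\rangle$ with itself, so it carries joins to meets, i.e. $(\bigsqcup_i a_i)\rightharpoonup b=\bigsqcap_i(a_i\rightharpoonup b)$; equivalently, unfold the adjunction chain, rewriting $c\sqcap\bigsqcup_i a_i$ as $\bigsqcup_i(c\sqcap a_i)$ via the preliminary.

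For \eqref{eq3lm1} and \eqref{eq4lm1} one inequality in each case is immediate from monotonicity: \eqref{eq25} gives $\bigsqcup_i(b\rightharpoonup a_i)\leq b\rightharpoonup\bigsqcup_i a_i$ and \eqref{eq26} gives $\bigsqcup_i(a_i\rightharpoonup b)\leq(\bigsqcap_i a_i)\rightharpoonup b$. The reverse inequalities are the crux, and I would first establish the binary cases. For \eqref{eq4lm1} with two elements, rewrite $(a_1\sqcap a_2)\rightharpoonup b$ as its meet with $1=(a_1\rightharpoonup a_2)\sqcup(a_2\rightharpoonup a_1)$, distribute $\sqcap$ over $\sqcup$, and bound the two resulting meets: each $\big((a_1\sqcap a_2)\rightharpoonup b\big)\sqcap(a_i\rightharpoonup a_j)$ lies below $a_i\rightharpoonup b$ because $a_i\sqcap(a_i\rightharpoonup a_j)\leq a_i\sqcap a_j$, followed by modus ponens $x\sqcap(x\rightharpoonup y)\leq y$. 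For \eqref{eq3lm1} the dual: rewrite $b\rightharpoonup(a_1\sqcup a_2)$ as its meet with the same unit, distribute, and use $(a_1\sqcup a_2)\sqcap(a_1\rightharpoonup a_2)\leq a_2$ (the preliminary plus modus ponens) to bound each resulting meet below by some $b\rightharpoonup a_j$.

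Finally I would lift both binary identities to arbitrary finite $n$ by induction, peeling off one argument --- $\bigsqcap_{i\leq n}a_i=a_1\sqcap\bigsqcap_{2\leq i\leq n}a_i$, and dually for joins --- and combining the binary case with the induction hypothesis. I expect the binary estimates for \eqref{eq3lm1} and \eqref{eq4lm1} to be the only real obstacle: they are exactly where prelinearity does its work, and the inequality bookkeeping has to be set up with some care; everything else is routine adjunction chasing.
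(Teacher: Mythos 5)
Your proof is correct, and its overall decomposition coincides with the paper's: identities \eqref{eq1lm1} and \eqref{eq2lm1} from residuation alone, \eqref{eq3lm1} and \eqref{eq4lm1} needing prelinearity on top of the monotonicity facts \eqref{eq25} and \eqref{eq26}. The difference is one of depth rather than of route: the paper's proof is essentially a pointer to the literature (it cites \cite{Bou} for the first pair and \cite{MADEIRA20161011} for the second), whereas you supply the actual arguments --- the adjunction/Galois-connection chains for \eqref{eq1lm1}--\eqref{eq2lm1}, and for the hard inequalities of \eqref{eq3lm1}--\eqref{eq4lm1} the standard MTL-algebra trick of meeting with $1=(a_1\rightharpoonup a_2)\sqcup(a_2\rightharpoonup a_1)$, distributing $\sqcap$ over $\sqcup$ (legitimate because $x\sqcap(-)$ is a left adjoint, since $\odot=\sqcap$), and discharging each summand by modus ponens, followed by induction on $n$. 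Your binary estimates check out: $\big((a_1\sqcap a_2)\rightharpoonup b\big)\sqcap(a_1\rightharpoonup a_2)\leq a_1\rightharpoonup b$ via $a_1\sqcap(a_1\rightharpoonup a_2)\leq a_1\sqcap a_2$, and $(a_1\sqcup a_2)\sqcap(a_1\rightharpoonup a_2)\leq a_2$ for the join case. What the paper's version buys is brevity and reuse of known results; what yours buys is a self-contained proof that makes explicit exactly where prelinearity is consumed, which is precisely the point the lemma is meant to convey.
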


\begin{proof}
Properties (\ref{eq1lm1}) and (\ref{eq2lm1}) are true for any residuated lattice \cite{Bou}. Equalities  (\ref{eq3lm1}) and (\ref{eq4lm1}) follow from properties \ref{eq25} and \ref{eq26}, respectively, but also require  the prelinearity condition \cite{MADEIRA20161011}.
\end{proof}

 \begin{example} \label{GodelAlgebra} The Boolean algebra over $\{0,1\}$ with the usual Boolean operators, $\textbf{2}=\langle \{0,1\},\wedge,\vee,1,0, \to\rangle$ is, of course, an iMTL-algebra. A more interesting example introduces a third element in the carrier, $u$, as a denotation of \emph{unknown}. The resulting iMTL-algebra $\mathbf{3}=\langle\{\bot, u, \top\},\wedge_3,\vee_3,\top,\bot, \rightarrow_3\rangle$ is the chain of three elements, where
 
		\begin{center}
		\begin{tabular}{l|rrr} 
			$\vee_3$ & $\bot$ & $u $ & $\top$\\
			\hline
			$\bot$ & $\bot$ & $u$ & $\top$ \\
			$u$ & $u$ & $u$& $\top$ \\
			$\top$ & $\top$ & $\top$&$\top$ 
		\end{tabular} \;\;\;
		\begin{tabular}{l|rrr} 
			$\wedge_3$ & $\bot$ & $u$& $\top$\\
			\hline
			$\bot$ & $\bot$ & $\bot$ &$\bot$ \\
			$u$ & $\bot$ & $u $&$u$ \\
			$\top$ & $\bot$ & $u$ &$\top$ 
		\end{tabular}
	\;\; \;
	\begin{tabular}{l|rrr} 
			$\rightarrow_3$ & $\bot$ & $u$ &$\top$\\
			\hline
			$\bot$ & $\top$ & $\top $&$\top$ \\
			$u$    & $\bot$    & $\top$ &$\top$ \\
			$\top$ & $\bot$ & $u$&$ \top$ 
		\end{tabular} 
		\end{center}

\noindent
Finally, a well-known example from fuzzy logic research, is  \emph{G\"odel algebra} 
$\G=\langle [0,1], min, max,0,1,\to\rangle$,
where $max$ and $min$ retain the usual definitions, and implication is given by
 $$
 a \to b = \begin{cases} 1, \text{ if } a \leq b \\ b, \text{ otherwise } \end{cases} $$
\end{example}

 \subsection{Adding a metric}
 
%
 
In order to give a concrete meaning to the \emph{vagueness-inconsistency square} depicted in Fig.~\ref{fig1}, one needs to endow  any iMTL-algebra $A$ with a suitable metric, i.e. a function $d:A\times A\to \mathbb{R}^+$ such that
 $d(x,y)=0$ iff $x=y$, and $d(x,y)\leq d(x,z) + d(z,y)$. Thus, we will focus on iMTL-algebras $\mathbf A$ whose carrier $A$ supports a metric space $(A,d)$, for a suitable choice of $d$.


%

	\begin{example}\label{ex.metricLattices}
	Metrics for iMTL-algebras $\mathbf{2}$ and $\mathbf{3}$ are shown below.
\begin{center}
\begin{tabular}{c|c}
$\mathbf{2}$ & $\mathbf{3}$\\
\hline \vspace{0.3cm}
$d(x,y)=\begin{cases} 0 & \text{ if } x=y \\
	1 & \text{ otherwise}\end{cases}$
\; \; 	& \; \; 
\begin{tabular}{l|rrr} 
			$d$ & $\bot$ & $u$ &$\top$\\
			\hline
			$\bot$ & $0$ & $1 $&$2$ \\
			$u$ & $1$ & $0$ &$1$ \\
			$\top$ & $2$ & $1$&$ 0$ 
		\end{tabular}	
\end{tabular}
\end{center}

	\noindent
	On its turn,  $\G$ can be endowed with the Euclidean metric, 
	$d(x,y)=\sqrt{(x-y)^2}$.
	\end{example}

\subsection{An algebra of (pairs of) weights}

If the values of both transitions and propositions are pairs of weights, an algebra is needed to operate upon them. We resort to the notion of a twist-structure \cite{Kracht1998-KRAOEO}, enriched with a metric. Formally,

\begin{definition}\label{twistG} Given a iMTL-algebra $\A=\lat$ enriched with a metric $d:A\times A\to \mathbb{R}^+$, a \emph{metric $\A$-twisted algebra}
$\A^2=\langle A\times A,\doublesqcap, \doublesqcup, \Longrightarrow, \nneg, D \rangle$ is defined as follows: for any  $(a,b)$, $(c,d)$ $\in A \times A$
 \begin{itemize}[noitemsep]
\item $ (a, b) \doublesqcap (c,d) =(a \sqcap c, b \sqcup 
d) $
\item $(a, b) \doublesqcup (c,d) = (a \sqcup c, b \sqcap d) $
\item $(a,b) \Longrightarrow (c, d) = (a \rightharpoonup c, a \sqcap d)$
\item $\nneg (a, b) =(b, a)$
\item $D((a,b),(c,d))=\sqrt{d(a,c)^2+d(b,d)^2}$ is the metric over $A\times A$ induced by $d$.
\end{itemize}
\end{definition}

\noindent
The order in $\A$ is lifted to $\A^2$  as
\[(a,b)\preccurlyeq(c,d)\text{ if and only if }a \leq c \text{ and }b\geq d\]


\medskip


\noindent Under this definition, we are now able to characterize precisely the sets of paraconsistent, consistent and strictly consistent pairs, $\Delta_P$, $\Delta_C$ and $\Delta$, as for Fig.~\ref{fig1}. Thus, in the same order, 
		$$\Delta_P=\{(a,b)\vert D((a,b),(1,1))\leq D((a,b),(0,0)) \}$$
		$$\Delta_C=\{(a,b)\vert D((a,b),(0,0))\leq D((a,b),(1,1)) \}$$
	$$\Delta=\Delta_P \cap \Delta_C$$
	
	

\begin{example}
Let us revisit the iMLT-algebras considered in  Example~\ref{ex.metricLattices}. 
\begin{itemize}
    \item 
As expected, in $\2^2$, the four-element lattice $\mathcal{FOUR}$  mentioned in section \ref{scintrod}, $(1,1)\in \Delta_P$, since $$D((1,1),(1,1))=\sqrt{d(1,1)+d(1,1)}=0\; \; \leq\; \; D((0,0),(1,1))=\sqrt{d(0,1)+d(0,1)}=\sqrt{2}$$
Hence,  $\Delta = \{(1,0),(0,1)\}$, $\Delta_P=\{(1,1)\}$ and $\Delta_C = \{(0,0),(1,0),(0,1)\}$. 


\item
For $\mathbf{3}^2$, observe, for instance, that $(\top,u)\in \Delta_P$, since 
\[D((\top,u),(\bot,\bot))= \sqrt{d(\top,\bot)^2+d(u,\bot)^2}=\sqrt{5} \; \; \text{and}\; \; D((\top,u),(\top,\top))=\sqrt{d(\top,\top)^2+d(u,\top)^2}=1.\] Hence,  
$\Delta_P =\{(\top,\bot),(u,u),(\bot,\top),(\top,u),(u,\top),(\top,\top)\}\, $, $\Delta_C =\{(\top,\bot),(u,u),(\bot,\top),(\bot,\bot),(u,\bot),(\bot,u)\}$ and $\Delta=\{(\top,\bot),(u,u),(\bot,\top)\}$. 

\item 
 Finally, in $\G^2$, 
 observe that, for any $(a,b)\in [0,1]\times [0,1]$,
 \[D((a,b),(1,1))=\sqrt{d(a,1)^2 + d(b,1)^2}=\sqrt{(a^2+b^2)-2(a+b)+2}\] and, if $a+b>1$, $ \sqrt{a^2+b^2}=D((a,b),(0,0))$. Therefore $\Delta_P=\{(a,b)\in [1,0]\times[1,0]\mid a+b>1\}$, $\Delta_C= \{(a,b)\in [1,0]\times[1,0]\mid a+b<=1\}$, and $\Delta=  \{(a,b)\in [1,0]\times[1,0]\mid a+b=1\}$,  as expected. 
\end{itemize}
\end{example}

\subsection{Paraconsistent transition systems}


We are now ready to define paraconsistent transition systems parametric on a iMTL-algebra $\mathbf{A}$. Notice that, in the sequel, given a value $(a,b)$,  $(a,b)^+$  denotes $a$ and $(a,b)^- $  denotes $b$.

\begin{definition}\label{plts} Let $\A$ be an  iMTL-algebra. An \emph{$\A$-paraconsistent transition system} ($\PTS$) is a structure $(W,R)$ where $W$ is a finite (non-empty) set of \emph{states}, and
$ R \subseteq W\times W \times A \times A$
   is a relation over $W$ and the carrier of  $\A$ such that between any pair of states $w_1,w_2 \in W$, there is at most one transition.
\end{definition}

\begin{example}\label{exframes} The definition is illustrated with two examples over  $\G$.

\begin{center}
\begin{tikzpicture}[->]{
   \node[state] (1) {$w_1$};
    \node[state] (2) [right = of 1] {$w_2$};
    \node[state] (3) [below = of 2] {$w_3$};
    \node[state] (4) [right = of 2] {$w_4$};
    \node[state] (5) [right = of 3] {$w_5$};
  \draw (1) edge[above] node[above=0.5] {$(0.4,0.7)$} (2);
    \draw (1) edge[above] node[below,rotate=-45] {$(0.3,0.6)$} (3);
    \draw (2) edge[above] node[above=0.5] {$(0.2,0.8)$} (4);
  \draw (3) edge[above] node[below=0.5] {$(0.2,0.9)$} (5);}
\end{tikzpicture}
\begin{tikzpicture}[->][scale=.6]
    \node[state] (1) {$v_1$};
    \node[state] (2) [right = of 1] {$v_2$};
    \node[state] (3) [right = of 2] {$v_3$};
    \node[state] (4) [below = of 3] {$v_4$};
    \draw (1) edge[above]  node[above=0.5] {$(0.5,0.5)$} (2);
    \draw (2) edge[above] node[above=0.5] {$(0.3,0.5)$} (3);
    \draw (2) edge[above] node[below, rotate=-45] {$(0.5,0.5)$} (4);
\end{tikzpicture}
\end{center}
\end{example}


A tuple $(w_1,w_2,a,b)\in R$ represents  a transition from $w_1$ to $w_2$ which occurs with a certainty degree $a$, and is prevented to occur  with a certainty degree $b$.  Note that a PTS over $\mathbf{2}$ in which all transitions are weighted by $(1,0)$ is just a  standard, deterministic transition system.

The fact that each pair of states can only participate in one transition,
makes it possible to write $R(w_1,w_2)=(a,b)$ for $(w_1,w_2,a,b)\in R$,
as well as to express $R$ in terms of a positive and a negative accessibility relation  $R^+,R^-:W \times W \longrightarrow A$ such that for $w,w'\in W$, 

\[R^+(w,w')=\begin{cases}  a \text{ if }(w,w',a,b)\in R \\ 0 \text{ otherwise } \end{cases}
\; \text{ and }\; \; \;  
R^-(w,w')=\begin{cases}  b \text{ if }(w,w',a,b)\in R \\ 0 \text{ otherwise } \end{cases}\].

Thinking of PTS as Kripke frames, we may define the corresponding notion of a model over an arbitrary iMTL-algebra, as follows:

\begin{definition} Let $\A=\lat$ be an iMTL-algebra and $\Prop$  a set of propositions. 
A \emph{${\A}$-paraconsistent Krikpe model}  is a triple $M=(W,R,V)$ where 
$(W,R)$ is a PTS over $\A$
and $V :\Prop \times W \rightarrow A \times A$ is a \emph{valuation function}. For $p\in\Prop$ and $w\in W$, $V(p,w)=(a,b)$ gives a pair of truth values $(a,b)\in A \times A$, where $a$ (respectively, $b$) is the degree upon which $p$ is considered to hold (respectively, not to hold) in state $w$.

\end{definition}




\section{The logic}\label{sclogic}
This section introduces a logic $\logica{\A}$ to specify paraconsistent Kripke models parametric on a iMTL-algebra $\A$. 

\subsection{Syntax} 
\label{subsyntax}

\begin{definition}[Formulas]
    Given an iMTL-algebra $\A$ and a set of propositions $\Prop$, the formulas $\Fm(\Prop)$ of $\logica{\A}$ are generated by the following grammar:
\[\varphi 
:= p \;\vert\; \bot 
  \;\vert\; \neg\varphi \;\vert\; 
  \varphi \wedge \varphi \vert \varphi \vee \varphi \;\vert\;
  \varphi \rightarrow \varphi \;\vert\;
  \openbox 
  \varphi \;\vert\;
   \Diamond \varphi \;\vert\; \slashed{\openbox} \varphi \;\vert\; \slashed{\Diamond} \varphi 
   \;\vert\; \circ \varphi
\]
where $p\in\Prop$. 
\end{definition}
\noindent As usual, the following abbreviations are assumed:  $\top$ for $\neg\bot$, and $\varphi_1\leftrightarrow\varphi_2$  for $(\varphi_1\rightarrow\varphi_2)\wedge(\varphi_2\rightarrow\varphi_1)$. 

Informally, connectives $\bot$, $\neg$, $\wedge$, $\vee$ and $\to$ stand for false, negation, conjunction, disjunction and implication, respectively. 
There are two sets of modalities over positive and negative information, respectively. Together with negation, modalities are discussed in next sub-section. 
Finally, $\circ$ is the unary consistency connective.


\subsection{Satisfaction relation}
The satisfaction relation in $\logica{\A}$ evaluates the satisfaction of a formula as pairs $(a,b)\in A\times A$ at a given state, in terms of the positive and negative information grades at hands. 
Modalities, in particular, explore both components of the accessibility relation defining the underlying $\A$-paraconsistent Kripke model, through auxiliary operators $\boxplus,\mydiamonplus$
and $\boxminus,\mydiamonminus$.

\begin{definition}[Satisfaction relation] 

Let $\A=\lat$ be an iMTL-algebra and  
$M=(W,R,V)$ an $\A$-paraconsistent Kripke model, where $W$ is a set of states, $R$ is the paraconsistent accessibility relation and $V$ a valuation function. The \emph{satisfaction relation for $M$ in $\logica{\A}$} is a function
 \[\models \;: W\times \Fm(\Prop) \longrightarrow A \times A\] defined for each $\varphi\in\Fm(\Prop)$ as follows:
\begin{itemize}
\item $(w\models p)=V(p,w)$
\item $(w \models \bot) = (0,1)$ 
\item $(w\models \neg \varphi) = \nneg (w\models \varphi)$
\item $(w\models \varphi_1 \wedge \varphi_2)= (w\models\varphi_1) \doublesqcap (w\models\varphi_2)$
\item $(w\models \varphi_1 \vee \varphi_2)= (w\models\varphi_1)\doublesqcup (w\models\varphi_2)$
\item  $(w\models \varphi_1 \to \varphi_2)= (w\models\varphi_1) \Longrightarrow (w\models\varphi_2)$
\item  $(w \models \openbox \varphi)= (\boxplus (w, \varphi^+),  \mydiamonplus (w,\varphi^-))$
\item  $(w \models \Diamond \varphi)= (\mydiamonplus (w,\varphi^+), \boxplus (w,\varphi^-))$
\item  $(w \models \slashed{\openbox} \varphi ) = (\mydiamonminus (w,\varphi^-), \boxminus (w,\varphi^+))$
\item $(w\models \slashed{\Diamond} \varphi)= (\boxminus (w,\varphi^-), \mydiamonminus (w,\varphi^+)) $
\item $(w\models \circ\varphi) = \begin{cases} (1,0) \text{ if }(w\models \varphi) \in \Delta_C \\ (0,1) \text{ otherwise}\end{cases}$
\end{itemize}
The definition resorts to the following operators
\begin{itemize}
 \item 
 $\boxplus (w,\varphi ^*) = \bigsqcap_{w'\in R[w]} (R^+(w,w')\rightharpoonup (w'\models \varphi)^*)$  
 \item $\boxminus (w,\varphi ^* )= \bigsqcap_{w'\in R[w]} (R^-(w,w')\rightharpoonup (w'\models \varphi)^*)$
 \item $\mydiamonplus (w,\varphi ^* )= \bigsqcup_{w'\in R[w]} (R^+(w,w')  \sqcap (w'\models \varphi)^*)$
 \item $\mydiamonminus (w,\varphi ^*) = \bigsqcup_{w'\in R[w]} (R^-(w,w') \sqcap (w'\models \varphi)^*)$
\end{itemize} 

\noindent
where $*\in\{^+,^-\}$ and $R[w]=\{w'|(w,w',a,b)\in R, \text{ for some } a,b \in A \}$.
 Two formulas $\varphi, \psi \in \Fm(\Prop)$ are semantically equivalent, in symbols $\varphi \equiv \psi$, if for any $w\in W$, $(w\models \varphi)=(w\models \psi)$. We say that $\varphi$ is valid if, for any $w\in W$, $(w\models \varphi)=(1,0)$.
\end{definition}

\begin{example}
Consider the $\G$-paraconsistent Kripke model for $\Prop= \{p\}$ defined over the left \G-PTS of Example~\ref{exframes}, with 
$V(w_2,p)=(1,0)$, $V(w_3,p)=(0,1)$ and 
    $V(w,p)=(0,0)$ for $w\in W \setminus \{w_2,w_3\}$. Then, 
{
\begin{align*}
(w_1 \models \openbox p) & =  (\boxplus (w, p^+),  \mydiamonplus (w,p^-))\\
 & = \big(\bigsqcap_{w'\in R[w_1]}(R^+(w_1,w')\rightharpoonup (w'\models p)^+),\bigsqcup_{w'\in R[w_1]}(R^+(w_1,w')\sqcap (w'\models p)^-\big)\\
 & =  \big((R^+(w_1,w_2) \rightharpoonup (w_2\models p)^+)\sqcap (R^+(w_1,w_3) \rightharpoonup (w_3\models p)^+), \\
 &  \;\;\;\; (R^+(w_1,w_2) \wedge (w_2\models p)^-) \sqcup (R^+(w_1,w_3) \sqcap (w_3\models p)^-)\big) 
 \\
 & =  (\min(1,0),  \max(\min(0.4,0),\min(0.3,1))) \\
 & =  (\min(1,0),\max(0,0.3)) =  (0, 0.3)
\end{align*} 
}
\end{example}

A few comments are in order with respect to both modalities and negation. Let us start with the latter.

The satisfaction of a formula $\varphi\in Fm(Prop)$ is given by a pair $(a,b)\in A\times A$ assigning truth degrees $a$ and $b$ to the sentences "$\varphi$ holds" and "$\varphi$ does not hold", respectively. The satisfaction of $\neg\varphi$ is simply the converse $(b,a)$, which has a straightforward interpretation. This negation is an involution and satisfies the so-called pre-minimality condition: $\neg \psi \leq \neg \phi\;  \text{if} \; 
\phi \leq  \psi$. Actually, in the underlying semantics,
 \begin{align*} 
& \nneg (a', b') \preccurlyeq \nneg (a, b) \; \; \text{if} \; \;
    (a, b) \preccurlyeq (a', b')\\
\; \;  \; \; \; \; =\; \; \; & 
 (b', a') \preccurlyeq (b, a)  \; \; \;  \text{if} \; \; \;
    (a,b) \preccurlyeq (a', b')\\
\; \; \; \; \; \; =\; \; \; & 
 b' \leq b\; \text{and}\; a' \geq a  \; \; \;  \text{if} \; \; \; 
    a \leq a' \; \text{and}\; b \geq b'
\end{align*}
Together, these properties define a De Morgan negation algebra \cite{Almeida09}, making both De Morgan equalities to hold, as well as $\neg \top = \bot$ and $\neg \bot = \top$. Similarly each pair of modalities is dual, as detailed in lemma~\ref{ldual}. Our syntax has, therefore, some redundancy. Note, however, that 
$$  
(a,b) \doublesqcap (\nneg (a,b)) \; =\; 
(a,b) \doublesqcap (b, a) \; =\;  
(a \sqcap b, b \sqcup a)
$$
which does not coincide with $(0,1)$, the interpretation of $\bot$, even if, in a sense, it `gets closer' by descending in the underlying lattice.

A negation operator $\simnot$ induced through implication, i.e.
$$(w\models \simnot \varphi)\; =\; (w\models \varphi \to \bot) \; =\; (w\models \varphi) \Longrightarrow (w\models \bot)$$
has a different behaviour. Actually, if the value of $\varphi$ in $w$ is $(a,b)$, the value of $\simnot \varphi$ becomes 
$(a \toh 0, a \sqcap 1) \, = \, (a \toh 0, a)$. This means that $\simnot \varphi$ in a state $w$ yields $\top$ iff the value of $\varphi$ in that state  has the form $(0,b)$, for any $b\in A$. This also means that the corresponding pre-minimality property depends on the specific underlying iMTL-algebra. The same applies to the duality of $\bot$ and $\top$. Both clearly hold for $\mathbf{2}$, $\mathbf{3}$ and $\G$, but not necessarily in general.  Clearly, this negation is not an involution, even if the De Morgan laws hold. Actually,

\begin{lemma} 
Let $\A$ be a iMTL-algebra. The De Morgan laws for $\simnot$ are valid in $\logica{\A}$.
\end{lemma}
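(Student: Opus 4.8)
The plan is to unfold the satisfaction clauses on both sides of each of the two De Morgan equalities and reduce them to the lattice identities already collected in Lemma~\ref{lmaux}. First I would fix an $\A$-paraconsistent Kripke model $M=(W,R,V)$ and a state $w\in W$, and set $(w\models\varphi)=(a_1,b_1)$ and $(w\models\psi)=(a_2,b_2)$. The key preliminary observation is the one already made in the text preceding the statement: $(w\models\simnot\chi)=\big((w\models\chi)^+\toh 0,\ (w\models\chi)^+\big)$, so $\simnot$ acts on a pair by $(a,b)\mapsto(a\toh 0,a)$ and, in particular, depends only on the positive component of its argument.

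For the conjunctive law $\simnot(\varphi\wedge\psi)\equiv\simnot\varphi\vee\simnot\psi$ I would compute the left-hand side via the $\wedge$-clause, obtaining $(w\models\varphi\wedge\psi)=(a_1\sqcap a_2,\ b_1\sqcup b_2)$ and hence $(w\models\simnot(\varphi\wedge\psi))=\big((a_1\sqcap a_2)\toh 0,\ a_1\sqcap a_2\big)$. For the right-hand side, $(w\models\simnot\varphi)=(a_1\toh 0,a_1)$, $(w\models\simnot\psi)=(a_2\toh 0,a_2)$, and the $\vee$-clause (which joins positive and meets negative components) gives $(w\models\simnot\varphi\vee\simnot\psi)=\big((a_1\toh 0)\sqcup(a_2\toh 0),\ a_1\sqcap a_2\big)$. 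The negative components agree automatically, so what remains is the identity
\[(a_1\sqcap a_2)\toh 0 \;=\; (a_1\toh 0)\sqcup(a_2\toh 0),\]
which is the $n=2$, $b=0$ instance of~(\ref{eq4lm1}). The disjunctive law $\simnot(\varphi\vee\psi)\equiv\simnot\varphi\wedge\simnot\psi$ is treated symmetrically: both sides turn out to have negative component $a_1\sqcup a_2$, and the positive components coincide by the $n=2$, $b=0$ instance of~(\ref{eq2lm1}), i.e. $(a_1\sqcup a_2)\toh 0=(a_1\toh 0)\sqcap(a_2\toh 0)$. Since $w$ is arbitrary, this yields $\equiv$ in both cases.

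I do not expect a genuine obstacle here: the argument is essentially bookkeeping with the definitions, once one notices that the negative components of the two sides always match by idempotence/commutativity of $\sqcap,\sqcup$. The only substantive remark is that identity~(\ref{eq4lm1}), unlike~(\ref{eq2lm1}), already relies on the prelinearity axiom~(\ref{prelin}); so the conjunctive De Morgan law for $\simnot$ is precisely where the full iMTL structure, as opposed to that of an arbitrary residuated lattice, is actually used.
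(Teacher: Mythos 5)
Your proof is correct and takes essentially the same route as the paper's: unfold the satisfaction clauses, note that the negative components match, and reduce the conjunctive law to the instance $(a_1\sqcap a_2)\rightharpoonup 0=(a_1\rightharpoonup 0)\sqcup(a_2\rightharpoonup 0)$ of~(\ref{eq4lm1}), with the disjunctive law handled symmetrically via~(\ref{eq2lm1}) (the paper likewise proves one law and declares the other ``similar''). Your added remark that prelinearity is needed only for the conjunctive law is accurate and consistent with the paper's proof of Lemma~\ref{lmaux}.
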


\begin{proof}
Let us consider one of the laws; the proof of the other is similar.
\vspace{-0.5cm}
\begin{center}
 \begin{calculation} w \models \simnot (\varphi_1 \wedge \varphi_2)
\just = {definition of $\simnot$} 
w \models (\varphi_1 \wedge \varphi_2) \to \bot 
\just = {definition of $\models$} 
\left( w \models (\varphi_1 \wedge \varphi_2)\right) \Longrightarrow (w\models \bot) 
\just = {definition of $\models$} 
\left ( (w\models \varphi_1)\doublesqcap(w\models\varphi_2)\right)\Longrightarrow (0,1) 
\just = {making $a = (w\models \varphi_1)^+$, $b = (w\models \varphi_1)^-$, and similalry for $a'$, $b'$}
\left ( (a,b)\doublesqcap (a',b')\right) \Longrightarrow (0,1) 
\just = {definition of $\doublesqcap$} 
(a\sqcap a', b\sqcup b')\Longrightarrow (0,1)
\just = {definition of $\Longrightarrow$} 
((a\sqcap a')\rightharpoonup 0, a\sqcap a')
\just = {\eqref{eq4lm1}: $\bigsqcap _i a_i \rightharpoonup b = \bigsqcup_i(a_i\rightharpoonup b)$}
(a\to0 \sqcup a'\to0,a\sqcap a')
\just = {definition of $\doublesqcup$} 
(a\rightharpoonup 0,a)\doublesqcup(a'\rightharpoonup 0,a')
\just = {definition of $\Longrightarrow$; unfolding abbreviations $a,a',b,b'$}
\left( (w\models\varphi_1)\Longrightarrow(0,1)\right)\doublesqcup \left((w\models \varphi_2)\Longrightarrow(0,1)\right)
\just = {definition of $\simnot$} 
(w\models \simnot\varphi_1)\doublesqcup(w\models\simnot\varphi_2) 
\just = {definition of $\models$} 
w\models(\simnot \varphi_1 \vee \simnot \varphi_2)
\end{calculation}\qedhere
 \end{center}

\end{proof}
Can we directly compare both negations? Again the result depends on the underlying iMTL-algebra. For all the cases considered in this paper, however, a simple calculation leads to the following conclusion:
\begin{lemma} 
Let $\phi$ be a formula in $\logica{\A}$. For $\A$ instantiated to $\mathbf{2}$, $\mathbf{3}$ or $\G$,
\begin{align*}
(w\models \neg \phi) \, \preccurlyeq\, (w\models \simnot \phi)
& \; \text{ if }\; (w\models \phi)^+ = 0\\
(w\models \simnot \phi) \, \preccurlyeq\, (w\models \neg \phi)
& \; \text{ if }\; (w\models \phi)^+ \neq 0
\end{align*}
\end{lemma}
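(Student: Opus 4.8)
The plan is to evaluate both negations at an arbitrary state, reduce each of the two claimed order-inequalities to a single inequality in the underlying iMTL-algebra, and then settle those by a short inspection of the three algebras.

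First I would fix $\phi\in\Fm(\Prop)$ and $w\in W$ and write $(a,b)=(w\models\phi)$, so $a=(w\models\phi)^{+}$ and $b=(w\models\phi)^{-}$. By the satisfaction clause for $\neg$ and the definition of $\nneg$ in Definition~\ref{twistG}, $(w\models\neg\phi)=\nneg(a,b)=(b,a)$. Unfolding $\simnot\phi$ as $\phi\to\bot$ and using the clauses for $\to$ and $\bot$ together with the definition of $\Longrightarrow$, one gets $(w\models\simnot\phi)=(a,b)\Longrightarrow(0,1)=(a\toh 0,\;a\sqcap 1)=(a\toh 0,\;a)$. Since $(x,y)\preccurlyeq(z,t)$ iff $x\leq z$ and $y\geq t$, and the two pairs above already agree in their second coordinate (both equal to $a$), the first claim $(b,a)\preccurlyeq(a\toh 0,a)$ is equivalent to $b\leq a\toh 0$, and the second claim $(a\toh 0,a)\preccurlyeq(b,a)$ is equivalent to $a\toh 0\leq b$.

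Next I would split on whether $a=0$. If $a=0$ then $a\toh 0=0\toh 0=1$, the top of the lattice, so $b\leq a\toh 0$ holds trivially; this half in fact needs nothing special about the algebra. If $a\neq 0$, I would check that $a\toh 0=0$ in each of the three cases: for $\mathbf 2$ this is $1\to 0=0$; for $\mathbf 3$ it is $u\rightarrow_{3}\bot=\bot$ and $\top\rightarrow_{3}\bot=\bot$, read off the table in Example~\ref{GodelAlgebra}; and for $\G$ it is the Gödel clause $a\to 0=0$ whenever $a>0$. Hence $a\toh 0=0\leq b$, which is the second inequality. Combining the two cases gives the statement.

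The single genuine point --- and the reason the lemma is stated only for $\mathbf 2$, $\mathbf 3$ and $\G$ rather than for an arbitrary iMTL-algebra --- is the identity ``$a\neq 0$ implies $a\toh 0=0$'', i.e. the pseudocomplement $a\toh 0$ takes only Boolean values in these algebras. In a general iMTL-algebra $a\toh 0$ may be a nontrivial element, neither below $b$ nor forced above it, so neither inequality need survive; this is precisely the caveat noted just before the lemma. Everything else is routine unfolding of the satisfaction clauses and the operations of Definition~\ref{twistG}.
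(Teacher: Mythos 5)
Your proposal is correct and matches the paper's approach: the paper's proof is simply ``by direct calculation in each case,'' and your unfolding of $(w\models\neg\phi)=(b,a)$ versus $(w\models\simnot\phi)=(a\toh 0,a)$, followed by the check that $a\neq 0$ forces $a\toh 0=0$ in $\mathbf{2}$, $\mathbf{3}$ and $\G$, is exactly that calculation made explicit. Your closing observation about why the restriction to these three algebras is needed is accurate as well.
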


\begin{proof}
By direct calculation in each case.
\end{proof}

The two pairs of modalities behave as expected, quantifying over the positive and the negative component of the accessibility relation, as shown in the following lemma. This assigns an existential, `diamond-like' behaviour to 
$\slashed{\openbox}$, and dually a universal, `box-like' behaviour to
$\slashed{\Diamond}$. 

\begin{lemma}\label{ldual} Let $\A$ be an iMTL-algebra. The following equivalences hold in any $\A$-paraconsistent Kripke model:

\begin{minipage}[b]{0.45\linewidth}
 \begin{eqnarray}\label{ldual1} 
    \openbox \neg \varphi & \equiv & \neg \Diamond \varphi\\
\Diamond \neg\varphi & \equiv & \neg \openbox \varphi
\end{eqnarray}
\end{minipage}
\begin{minipage}[b]{0.45\linewidth}
\begin{eqnarray}
\slashed{\openbox}\neg\varphi & \equiv &\neg\slashed{\Diamond}\varphi \label{ldual3}\\
 \slashed{\Diamond}\neg\varphi & \equiv &\neg\slashed{\openbox}\varphi
\end{eqnarray}
\end{minipage}
\end{lemma}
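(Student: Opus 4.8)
The plan is to prove the four equivalences in Lemma~\ref{ldual} by unfolding the satisfaction relation on both sides and comparing the resulting pairs component by component, relying on the De Morgan negation algebra already established for $\nneg$ and on the definitions of the auxiliary operators $\boxplus,\boxminus,\mydiamonplus,\mydiamonminus$. I would treat \eqref{ldual1} in full detail and remark that the remaining three are obtained by the same routine, possibly exploiting that $\nneg$ is an involution (so $\openbox\neg\varphi\equiv\neg\Diamond\varphi$ and $\Diamond\neg\varphi\equiv\neg\openbox\varphi$ are each other's contrapositive, and similarly for the slashed pair).

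For the first equivalence, fix an $\A$-paraconsistent Kripke model $M=(W,R,V)$ and a state $w\in W$. I would first compute the right-hand side: by definition $(w\models\Diamond\varphi)=(\mydiamonplus(w,\varphi^+),\boxplus(w,\varphi^-))$, and then $(w\models\neg\Diamond\varphi)=\nneg(w\models\Diamond\varphi)=(\boxplus(w,\varphi^-),\mydiamonplus(w,\varphi^+))$. Next I would compute the left-hand side: $(w\models\neg\varphi)=\nneg(w\models\varphi)$, so $(w\models\neg\varphi)^+=(w\models\varphi)^-$ and $(w\models\neg\varphi)^-=(w\models\varphi)^+$; substituting into $(w\models\openbox\neg\varphi)=(\boxplus(w,(\neg\varphi)^+),\mydiamonplus(w,(\neg\varphi)^-))$ gives exactly $(\boxplus(w,\varphi^-),\mydiamonplus(w,\varphi^+))$. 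The two pairs coincide for every $w$, hence $\openbox\neg\varphi\equiv\neg\Diamond\varphi$. The key observation making this work is that $\boxplus$ and $\mydiamonplus$ only ever see the starred component $(w'\models\varphi)^*$ of the valuation at successor states, so swapping $+$ and $-$ at the formula level is faithfully mirrored by swapping the two operators at the semantic level.

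The remaining three equivalences follow the same pattern. For $\Diamond\neg\varphi\equiv\neg\openbox\varphi$ one unfolds $(w\models\neg\openbox\varphi)=\nneg(\boxplus(w,\varphi^+),\mydiamonplus(w,\varphi^-))=(\mydiamonplus(w,\varphi^-),\boxplus(w,\varphi^+))$ and checks it equals $(w\models\Diamond\neg\varphi)=(\mydiamonplus(w,(\neg\varphi)^+),\boxplus(w,(\neg\varphi)^-))=(\mydiamonplus(w,\varphi^-),\boxplus(w,\varphi^+))$. For the slashed modalities \eqref{ldual3} and its companion, the same substitution works but now using the clauses $(w\models\slashed{\openbox}\varphi)=(\mydiamonminus(w,\varphi^-),\boxminus(w,\varphi^+))$ and $(w\models\slashed{\Diamond}\varphi)=(\boxminus(w,\varphi^-),\mydiamonminus(w,\varphi^+))$, together with $\nneg$ being an involution; here one must be a little careful because the slashed clauses already internally swap $+$ and $-$, so the bookkeeping of which component ends up where is the one spot where a sign error is easy to make.

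I do not expect a genuine obstacle: no use of prelinearity or of Lemma~\ref{lmaux} is needed, since we never distribute the residuum over joins or meets — the equivalences are purely a consequence of the symmetry built into the semantics of $\nneg$ and of the paired modal operators. The main thing to get right is the careful tracking of the $(\cdot)^+$ and $(\cdot)^-$ projections through the $\nneg$-induced swaps, especially for the two slashed cases where the defining clauses are themselves cross-wired; writing out the left- and right-hand sides as explicit ordered pairs and matching them coordinatewise is the safest route and is what I would present.
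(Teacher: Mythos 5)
Your proposal is correct and follows essentially the same route as the paper: unfold the satisfaction clauses for the modal operators, push $\neg$ through as the component swap $\nneg$, observe that $\boxplus,\mydiamonplus,\boxminus,\mydiamonminus$ only depend on the relevant projection $(w'\models\varphi)^{\pm}$ at successor states, and match the resulting pairs coordinatewise. The paper merely chooses to write out \eqref{ldual3} (and sketch \eqref{ldual1}) whereas you detail \eqref{ldual1} and sketch the rest; the computations are identical, and, as you note, neither argument needs prelinearity or Lemma~\ref{lmaux}.
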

\begin{proof}
We prove the 'box' cases, \eqref{ldual1} and \eqref{ldual3}; the remaining cases are similar.
 Hence, for \eqref{ldual3},
 \begin{calculation}
(w\models\slashed\openbox\neg\varphi)
\just = {definition of $\models$}
(\mydiamonminus (w,(\neg\varphi)^-), \boxminus (w, (\neg\varphi)^+) )
\just = {definition of $\boxminus$, $\mydiamonminus$}
\left(\bigsqcup\limits_{w'\in W} \{R^-(w,w')\sqcap (w'\models\neg \varphi)^-\},\bigsqcap\limits_{w'\in W} \{R^-(w,w')\rightharpoonup (w'\models\neg\varphi)^+\}\right)
\just = {definition of $\models$}
\left(\bigsqcup\limits_{w'\in W} \{R^-(w,w')\sqcap (\nneg(w'\models\varphi))^-\},\bigsqcap\limits_{w'\in W} \{R^-(w,w')\rightharpoonup (\nneg (w'\models\varphi))^+\}\right)
\just = {definition of $\nneg$}
\left(\bigsqcup\limits_{w'\in W} \{R^-(w,w')\sqcap (w'\models\varphi)^+\},\bigsqcap\limits_{w'\in W} \{R^-(w,w')\rightharpoonup (w'\models\varphi)^-\}\right)
\just = {definition of $\mydiamonminus$ and $\boxminus$}
(\mydiamonminus (w,\varphi^+),\boxminus (w,\varphi^-))
\just = {definition of $\nneg$}
\nneg(\boxminus(w,\varphi^-),\mydiamonminus (w,\varphi^{+}))
\just = {definition of $\models$}
(w\models\neg\slashed{\Diamond}\varphi)
\end{calculation}
The proof of \eqref{ldual1} has exactly the same structure, unfolding $(w\models\openbox\neg\varphi)$
into $(\boxplus (w, (\neg\varphi)^+),  \mydiamonplus (w,(\neg\varphi)^-))$, and then replacing $\boxplus$ and $\mydiamonplus$
by their definitions. The conclusion comes from swapping the formula through negation and rolling back the definitions.\qedhere

 \end{proof}

To conclude this section, let us look into the evaluation of a few, simple formulas, which may help to build up intuitions.  In the following three examples take $\G$ as the underlying iMTL-algebra.

\begin{example}
A very simple scenario illustrates the behaviour of both pairs of modalities in presence of `crisp' transitions, weighted by $(1,0)$ and $(0,1)$. Notice how the typical roles of the universal and existential modalities are reversed when going from the positive to the negative components of the accessibility relation.

\begin{minipage}[b]{0.5\linewidth}
\begin{center}
\begin{tikzpicture}[->]
    \node[state] (1) at (0,0) {$w$};
    \node[state] (2) at (2,0) {$w'$};
    \draw (1) edge node[above] {$(1,0)$} (2);
\end{tikzpicture}
\end{center}
\vspace{0.35cm}
\begin{tabular}{c|c|c|c|c}
\hline \hline
    $V(p,w')$ & (0,1) & (1,0) &(0,0) &(1,1)  \\
\hline \hline
    $\openbox p$ & (0,1) & (1,0) & (0,0) & (1,1) \\ \hline
    $\Diamond p$ & (0,1) & (1,0) & (0,0) & (1,1) \\ \hline
    $\slashed{\openbox} p$ & (0,1) & (0,1) & (0,1) & (0,1) \\ \hline
    $\slashed{\Diamond} p$ & (1,0) & (1,0) & (1,0) & (1,0) \\ \hline \hline
\end{tabular}
\end{minipage}
 \begin{minipage}[b]{0.4\linewidth}
\begin{center}
    
\begin{tikzpicture}[->]
    \node[state] (1) at (0,0) {$w$};
    \node[state] (2) at (2,0) {$w'$};
    \draw (1) edge node[above] {$(0,1)$} (2);
\end{tikzpicture}
\end{center}
\vspace{0.35cm}
\begin{tabular}{c|c|c|c|c}
\hline \hline
   $V(p,w')$  & (0,1) & (1,0) &(0,0) &(1,1)  \\
    \hline \hline
    $\openbox p$ & (1,0) & (1,0) & (1,0) & (1,0) \\ \hline
    $\Diamond p$ & (0,1) & (0,1) & (0,1) & (0,1) \\ \hline
    $\slashed{\openbox} p$ & (1,0) & (0,1) & (0,0) & (1,1) \\ \hline
    $\slashed{\Diamond} p$ & (1,0) & (0,1) & (0,0) & (1,1) \\ \hline \hline
\end{tabular}
 \end{minipage}
\end{example}
 \vspace{0.5cm}
 
 \begin{example}
Still sticking to one-transition systems, observe now the effect of introducing arbitrary weights in both transitions and the valuation of propositions. The following table, sums up the results for different valuations in both states $w$ and $v$.

\vspace{0.35cm}
\begin{center}
\begin{tikzpicture}[->]
    \node[state] (1) at (0,0) {$w$};
    \node[state] (2) at (3,0) {$w'$};
    \node[state] (3) at (6,0) {$v$};
    \node[state] (4) at (9,0) {$v'$};
    \draw (1) edge node[above] {$(0.7,0.3)$} (2);
    \draw (3) edge node[above] {$(0.3,0.7)$} (4);
\end{tikzpicture}
\end{center}

\vspace{0.35cm}
\noindent\resizebox{\columnwidth}{!}{%
\begin{tabular}{c|c|c|c|c|c|c|c|c|c}
\hline \hline
    $V(p,x),  x\in\{w',v'\}$ & $(0,0)$ & $(0,0.5)$  & $(0,1)$ & $(0.5,1)$ & $(1,1)$ & $(1,0.5)$ & $(1,0)$ & $(0.5,0)$ & $(0.5,0.5)$\\
    \hline \hline
    $w\models\openbox p$ & $(0,0)$ & $(0,0.5)$ & $(0,0.7)$ & $(0.5,0.7)$ & $(1,0.7)$ & $(1,0.5)$ & $(1,0)$ & $(0.5,0)$ & $(0.5,0.5)$\\
    \hline
    $v\models\openbox p$ & $(0,0)$ & $(0,0.3)$ & $(0,0.3)$ & $(1,0.3)$ & $(1,0.3)$ & $(1,0.3)$ & $(1,0)$ & $(1,0)$ & $(1,0.3)$\\
    \hline
    $w\models\Diamond p$ & $(0,0)$ & $(0,0.5)$ & $(0,1)$ & $(0.5,1)$ & $(0.7,1)$ & $(0.7,0.5)$ & $(0.7,0)$ & $(0.5,0)$ & $(0.5,0.5)$\\
    \hline
    $v\models\Diamond p$ & $(0,0)$ & $(0,1)$ & $(0,1)$ & $(0.3,1)$ & $(0.3,1)$ & $(0.3,1)$ & $(0.3,0)$ & $(0.3,0)$ & $(0.3,1)$\\
    \hline
    $w\models\slashed{\openbox} p$ & $(0,0)$ & $(0.3,0)$ & $(0.3,0)$ & $(0.3,1)$ & $(0.3,1)$ & $(0.3,1)$ & $(0,1)$ & $(0,1)$ & $(0.3,1)$\\
    \hline
    $v\models\slashed{\openbox} p$ & $(0,0)$ & $(0.5,0)$ & $(0.7,0)$ & $(0.7,0.5)$ & $(0.7,1)$ & $(0.5,1)$ & $(0,1)$ & $(0,0.5)$ & $(0.5,0.5)$\\
    \hline
    $w\models\slashed{\Diamond} p$ & $(0,0)$ & $(1,0)$ & $(1,0)$ & $(1,0.3)$ & $(1,0.3)$ & $(1,0.3)$ & $(0,0.3)$ & $(0,0.3)$ & $(1,0.3)$\\
    \hline
    $v\models\slashed{\Diamond} p$ & $(0,0)$ & $(0.5,0)$ & $(1,0)$ & $(1,0.5)$ & $(1,0.7)$ & $(0.5,0.7)$ & $(0,0.7)$ & $(0,0.5)$ & $(0.5,0.5)$\\
    \hline \hline
\end{tabular}%
}

\vspace{0.35cm}
\noindent
With no surprise, we conclude that 
$w\models\openbox p = v\models \neg\slashed{\openbox}p$,
$w\models \Diamond p = v\models\neg\slashed{\Diamond} p$,
$w\models \slashed{\openbox} p = v\models\neg\openbox p$, and
$w\models \slashed{\Diamond} p = v\models\neg\Diamond p$.
 \end{example}
 \vspace{0.5cm}
 
\begin{example}
Extending the model with an additional node as in
\begin{center}
\begin{tikzpicture}[->]
    \node[state] (1) at (0,0) {$w$};
    \node[state] (2) at (3,0) {$w'$};
    \node[state] (3) at (6,0) {$v$};
    \draw (1) edge node[above] {$(0.1,0.3)$} (2);
    \draw (2) edge node[above] {$(0.4,0.7)$} (3);
\end{tikzpicture}
\end{center}
we may evaluate deeper formulas. For example, 
$V(p,w')=V(p,v)=(1,0)$, entails 
$$
\openbox \openbox p \; =\; (1,0),\; \;  \; \;  \; \Diamond \openbox p\; =\;   (0.1,0), 
\; \;  \; \;  \; \slashed{\openbox}\openbox p\; =\; (0,1), \; \;  \; \;  \;\slashed{\Diamond} \openbox p\; = \;  (0,1). 
$$
and
$$
\slashed{\Diamond} \Diamond  p\; =\; (0,0.3), \; \;  \; \;  \;
\Diamond \slashed{\Box}  p \; =\;  (0.1,0), \; \;  \; \;  \;
\slashed{\Diamond}\slashed{\Box} p \; =\;  (0,1),\; \;  \; \;  \;
\slashed{\Diamond}\slashed{\Diamond} p\; =\;  (0,0.1).
$$
\end{example}
 \vspace{0.5cm}
 
\begin{example} As a last example, consider $\mathbf{3}$ as the underlying iMTL-algebra, and the following model

\begin{center}
\begin{tikzpicture}[->]
    \node[state] (1) at (0,0) {$w$};
    \node[state] (2) at (3,0) {$w_1$};
    \node[state] (3) at (6,0) {$w_2$};
    \node[state] (4) at (3,-2) {$w_3$};
    \draw (1) edge node[above] {$(u,\bot)$} (2);
    \draw (2) edge node[above] {$(\top,\bot)$} (3);
    \draw (1) edge node[below, rotate = -40] {$(\top,\bot)$} (4);
\end{tikzpicture}
\end{center}

A valuation such that $V(p,w_1)=V(p,w_2)=(\top,\bot)$ and $V(p,w_3)=(u,u)$ entails

\vspace{0.35cm}

\begin{minipage}[b]{0.23\linewidth}
\begin{tabular}{c|c}
\hline \hline
     $w\models \openbox p$ & $(u,u)$   \\
     \hline
     $w\models \openbox \openbox p$ & $(\top,\bot)$\\
     \hline
     $w\models \Diamond \openbox p$ & $(\top,\bot)$ \\
     \hline
     $w\models \slashed{\openbox}\openbox p$ & $(\bot,\top)$ \\
     \hline
     $w\models \slashed{\Diamond} \openbox p$ & $(\bot,u)$\\
     \hline \hline
\end{tabular}
\end{minipage}
\begin{minipage}[b]{0.23\linewidth}
\begin{tabular}{c|c}
\hline \hline
     $w\models \Diamond p$ & $(u,\bot)$   \\
     \hline
     $w\models \openbox \Diamond p$ & $(\bot,\top)$\\
     \hline
     $w\models \Diamond \Diamond p$ & $(u,\bot)$ \\
     \hline
     $w\models \slashed{\openbox}\Diamond p$ & $(u,\bot)$ \\
     \hline
     $w\models \slashed{\Diamond} \Diamond p$ & $(\top,\bot)$\\
     \hline \hline
\end{tabular}
\end{minipage}
\begin{minipage}[b]{0.23\linewidth}
\begin{tabular}{c|c}
\hline \hline
     $w\models \slashed{\openbox} p$ & $(\bot,\top)$   \\
     \hline
     $w\models \openbox \slashed{\openbox} p$ & $(u,\bot)$\\
     \hline
     $w\models \Diamond \slashed{\openbox} p$ & $(u,\bot)$ \\
     \hline
     $w\models \slashed{\openbox}\slashed{\openbox} p$ & $(\bot,\top)$ \\
     \hline
     $w\models \slashed{\Diamond} \slashed{\openbox} p$ & $(\bot,u)$\\
     \hline \hline
\end{tabular}
\end{minipage}
\begin{minipage}[b]{0.23\linewidth}
\begin{tabular}{c|c}
\hline \hline
     $w\models \slashed{\openbox} p$ & $(\bot,u)$   \\
     \hline
     $w\models \openbox \slashed{\openbox} p$ & $(\bot,u)$\\
     \hline
     $w\models \Diamond \slashed{\openbox} p$ & $(\bot,u)$ \\
     \hline
     $w\models \slashed{\openbox}\slashed{\openbox} p$ & $(u,\bot)$ \\
     \hline
     $w\models \slashed{\Diamond} \slashed{\openbox} p$ & $(\top,\bot)$\\
     \hline \hline
\end{tabular}
\end{minipage}

\end{example}

\section{Bisimilarity and invariance}
\label{secm}

This section characterises simulation and bisimulation for $\A$-paraconsistent models.

\begin{definition}\label{simulation} Let $\A$ be an iMTL-algebra enriched with a metric. Let $T_1=(W_1,R_1)$, $T_2=(W_2, R_2 )$ be two $\A$-PTS.

A relation 
 $S \subseteq W_1 \times W_2$ is a simulation if, for any $(w_1,w_2)\in S$, 
\[w_1  \xrightarrow{(a,b)\text{ }}_{T_1}w_1' \Rightarrow 
\langle \exists w_2'\in W_2, \exists (a',b')\in A \times A:w_2  \xrightarrow{(a',b')\text{ }}_{T_2}w_2' \wedge  (w_1',w_2')\in S \text{ and } (a,b)\preccurlyeq (a',b')\rangle\]
\end{definition}

\begin{example}\label{simeg} Let us consider the $\G$-PTS of Example~\ref{exframes}.
Observe that  $w_1 \lesssim v_1$, because there is a simulation, $S=\{ (w_1,v_1),( w_2,v_2),( w_3,v_2),(w_4,v_3),( w_5,v_4)\}$ that contains $(w_1, v_1)$.

\end{example}

\begin{definition} Let $\A$ be an iMTL-algebra enriched with a metric. 
Let $M_1=(W_1,R_1,V_1)$ and $M_2=(W_2,R_2,V_2)$ two $\A$-paraconsistent Kripke models. The relation $S\subseteq W_1\times W_2$ is a \emph{simulation between $M_1$ and $M_2$} if 
\begin{itemize}
\item $S$ is a simulation between the $\A$-PTS $(W_1,R_1)$ and $(W_2,R_2)$
\item for any $p\in\Prop$ and $(w,v) \in S$, $V_1(w,p) \preccurlyeq V_2(v,p)$
\end{itemize}
\end{definition}

\begin{lemma}\label{lm:simprev} Let $\A$ be an iMTL-algebra enriched with a metric.  If $S\subseteq W_1\times W_2$ is a simulation between $\A$-paraconsistent Kripke models $M_1$ and $M_2$ and $\langle w_1,w_2\rangle\in S$ then 
$(w_1\models \varphi)\preccurlyeq (w_2\models\varphi) \text{ for }\varphi \in \Fm^{+\Diamond}$
where $\text{Fm}^{+\Diamond}$ is the positive fragment of $\logica{\A}$ with the modal connective $\Diamond$, but omitting the consistency connective $\circ$.
\end{lemma}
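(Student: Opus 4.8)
The plan is to proceed by induction on the structure of the formula $\varphi \in \Fm^{+\Diamond}$. The base cases are $p \in \Prop$ and $\bot$: for a proposition, $(w_1 \models p) = V_1(w_1,p) \preccurlyeq V_2(w_2,p) = (w_2 \models p)$ is exactly the second clause in the definition of a simulation between Kripke models; for $\bot$, both sides equal $(0,1)$, so $\preccurlyeq$ holds reflexively. For the propositional connectives $\wedge$, $\vee$, $\rightarrow$ I would first check that each of the twist-algebra operations $\doublesqcap$, $\doublesqcup$, $\Longrightarrow$ is monotone with respect to $\preccurlyeq$ in both arguments — this reduces, componentwise, to monotonicity of $\sqcap$, $\sqcup$ and of $\rightharpoonup$ (the latter using \eqref{eq25} in the second argument and \eqref{eq26}, read contravariantly, in the first, together with $(a,b)\preccurlyeq(a',b')$ meaning $a\le a'$ and $b\ge b'$). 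Granted monotonicity, the inductive step for each binary connective is immediate from the induction hypotheses on the two subformulas. Note this is exactly why the negation connective $\neg$ and the consistency connective $\circ$ are excluded: $\nneg$ is \emph{antitone} for $\preccurlyeq$, and $\circ$ is not monotone at all, so neither can appear in the fragment.

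For the modal case $\Diamond\varphi$, recall $(w\models\Diamond\varphi) = (\mydiamonplus(w,\varphi^+),\, \boxplus(w,\varphi^-))$, so I must show $\mydiamonplus(w_1,\varphi^+) \le \mydiamonplus(w_2,\varphi^+)$ \emph{and} $\boxplus(w_1,\varphi^-) \ge \boxplus(w_2,\varphi^-)$. Fix $w_1' \in R_1[w_1]$, say with $R_1(w_1,w_1') = (a,b)$. By the simulation condition there is $w_2' \in R_2[w_2]$ with $(w_1',w_2') \in S$ and $(a,b) \preccurlyeq (a',b')$, where $(a',b') = R_2(w_2,w_2')$, i.e. $a \le a'$ and $b \ge b'$. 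The induction hypothesis gives $(w_1'\models\varphi) \preccurlyeq (w_2'\models\varphi)$, that is $(w_1'\models\varphi)^+ \le (w_2'\models\varphi)^+$ and $(w_1'\models\varphi)^- \ge (w_2'\models\varphi)^-$. Hence for the positive component, $R_1^+(w_1,w_1') \sqcap (w_1'\models\varphi)^+ = a \sqcap (w_1'\models\varphi)^+ \le a' \sqcap (w_2'\models\varphi)^+ = R_2^+(w_2,w_2')\sqcap(w_2'\models\varphi)^+$, which is one of the joinands defining $\mydiamonplus(w_2,\varphi^+)$; since this holds for every $w_1' \in R_1[w_1]$, taking the join yields $\mydiamonplus(w_1,\varphi^+) \le \mydiamonplus(w_2,\varphi^+)$. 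For the negative component the direction of the inequalities conspires correctly: for each $w_2' \in R_2[w_2]$ pick — and here is the subtlety — a matching $w_1'$; but the simulation is a relation from $W_1$, so this matching is not directly available and I instead argue that each conjunct $R_2^-(w_2,w_2')\rightharpoonup(w_2'\models\varphi)^-$ is bounded below by the corresponding $R_1^-(w_1,w_1')\rightharpoonup(w_1'\models\varphi)^-$ only along the forward matching, which suffices to bound the full meet over $R_1[w_1]$ from below by the meet over $R_2[w_2]$ — using $b \ge b'$ with \eqref{eq26} (antitone in the antecedent: $R_1^-(w_1,w_1')\rightharpoonup x \ge R_2^-(w_2,w_2')\rightharpoonup x$ when $b\ge b'$) and the induction hypothesis with \eqref{eq25} on the consequent.

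The main obstacle is precisely the negative component of $\Diamond$ (equivalently the $\boxplus$ part): the simulation clause only transfers $T_1$-transitions forward to $T_2$, yet the meet $\boxplus(w_2,\varphi^-)$ ranges over \emph{all} successors of $w_2$, and there is no a priori surjection from $R_1[w_1]$ onto $R_2[w_2]$. I would handle this carefully by noting that what is actually needed is $\boxplus(w_1,\varphi^-) \ge \boxplus(w_2,\varphi^-)$, i.e. the meet over the (possibly) larger index set on the right is the \emph{smaller} quantity, so one only needs that every conjunct appearing on the left dominates some quantity that is $\ge$ the whole right-hand meet — which is false in general — so in fact the correct reading is the reverse: $\boxplus(w_2,\varphi^-)$ is a meet that can only be \emph{smaller} than $\boxplus(w_1,\varphi^-)$ provided every right-hand conjunct is $\le$ some left-hand conjunct, and this is exactly what the forward simulation delivers when combined with the monotonicity lemmas; I would write this out explicitly conjunct-by-conjunct. (This is also the step where finiteness of $W$, assumed in Definition~\ref{plts}, guarantees the meets and joins are well-defined even before invoking completeness of $\A$.) Everything else is a routine unfolding of definitions and appeals to Lemma~\ref{lmaux} and properties \eqref{eq25}–\eqref{eq26}.
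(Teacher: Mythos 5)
Your induction has the right overall shape (it is the same induction the paper uses), but two steps are wrong as written. First, the implication case should not be there: $\Fm^{+\Diamond}$ does not contain $\to$, and your claim that $\Longrightarrow$ is monotone in both arguments with respect to $\preccurlyeq$ is false. Indeed $(a,b)\Longrightarrow(c,d)=(a\rightharpoonup c,\,a\sqcap d)$: by \eqref{eq26} the first component is antitone in $a$, and the second component grows with $a$ while sitting in the negative slot of the pair, so increasing the first argument can strictly decrease the value of the implication. Concretely, over $\mathbf 2$ take $(w_1\models p)=(0,1)\preccurlyeq(1,0)=(w_2\models p)$; then $(w_1\models p\to\bot)=(0\rightharpoonup 0,\,0\sqcap 1)=(1,0)$ while $(w_2\models p\to\bot)=(1\rightharpoonup 0,\,1\sqcap 1)=(0,1)$, and $(1,0)\not\preccurlyeq(0,1)$. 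The paper makes this explicit immediately after the lemma: $\varphi_1\to\varphi_2$ is listed, together with $\neg\varphi$, $\openbox\varphi$, $\slashed{\openbox}\varphi$ and $\slashed{\Diamond}\varphi$, among the formulas not preserved by simulations; its induction covers only $\bot$, propositions, $\wedge$, $\vee$ and $\Diamond$. So the $\to$ case must simply be dropped, otherwise the statement you are proving is false.

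Second, your treatment of the negative component of $\Diamond\varphi_1$ is garbled. That component is $\boxplus(w,\varphi_1^-)=\bigsqcap_{w'\in R[w]}\big(R^+(w,w')\rightharpoonup(w'\models\varphi_1)^-\big)$ --- it is built from the \emph{positive} accessibility relation, so the weights $b,b'$ and the relation $R^-$ you invoke are irrelevant here ($R^-$ only enters the semantics of $\slashed{\openbox}$ and $\slashed{\Diamond}$, which are outside the fragment), and the inequality you assert for them is in any case the reverse of what \eqref{eq26} gives. Moreover, the sufficient condition you settle on --- every conjunct indexed by $R_2[w_2]$ lies below some conjunct indexed by $R_1[w_1]$ --- does not entail $\boxplus(w_2,\varphi_1^-)\leq\boxplus(w_1,\varphi_1^-)$ (compare the meet of $\{1,10\}$ with the meet of $\{5\}$); no matching of $w_2$'s successors is needed at all. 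The correct step, which is what the paper's ``analogously'' amounts to, is: for each $w_1'\in R_1[w_1]$ with $R_1(w_1,w_1')=(a,b)$, the forward simulation yields $w_2'\in R_2[w_2]$ with $(w_1',w_2')\in S$ and $a\leq a'=R_2^+(w_2,w_2')$; using \eqref{eq25} with the induction hypothesis $(w_2'\models\varphi_1)^-\leq(w_1'\models\varphi_1)^-$ and \eqref{eq26} with $a\leq a'$, one gets $R_2^+(w_2,w_2')\rightharpoonup(w_2'\models\varphi_1)^-\leq R_1^+(w_1,w_1')\rightharpoonup(w_1'\models\varphi_1)^-$, hence $\boxplus(w_2,\varphi_1^-)\leq R_1^+(w_1,w_1')\rightharpoonup(w_1'\models\varphi_1)^-$ for every such $w_1'$, and taking the meet over $R_1[w_1]$ gives $\boxplus(w_2,\varphi_1^-)\leq\boxplus(w_1,\varphi_1^-)$. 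With $\to$ removed and this step repaired, the remaining cases (base cases, $\wedge$, $\vee$, and the positive component of $\Diamond$, which you handle correctly) coincide with the paper's proof.
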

\begin{proof} 
The proof proceeds by induction over the structure of formulas.
\begin{description}


\item[Case $\varphi=\bot$:]
 We have $(w\models\bot)=(1,0)=(v\models\bot )$. Hence $(w\models\bot)\preccurlyeq(v\models\bot)$.
\item [Case $\varphi=p$, $ p \in Prop$:]
Since $S$ is simulation, $V_1^+(w,p)\leq V_2^+(v,p)$  and $V_1^-(w,p)\geq V_2^-(v,p)$. Hence,
$$(w\models p)=(V_1^+(w,p),V_1^-(w,p)) \;  \preccurlyeq \; (V_2^+(w,p),V_2^-(w,p)) = (v\models p)$$.
\item[Case $\varphi=\varphi_1\wedge\varphi_2$:]
By induction hypothesis and monotonicity,
$$(w\models\varphi_1)\doublesqcap
(w\models\varphi_2) \preccurlyeq (v\models\varphi_1)\doublesqcap (v\models\varphi_2)$$
\noindent
Hence, $(w\models \varphi_1\wedge\varphi_2)\preccurlyeq(v\models \varphi_1 \wedge\varphi_2)$.




\item[Case $\varphi= \varphi_1\vee\varphi_2$:] Analogous to the previous case.
\item[Case $\varphi = \Diamond \varphi_1$:]
We have to prove that 
\begin{align*} 
& \big(\Max_{w'\in R_1[w]} \{R_1^+(w,w')\sqcap(w'\models\varphi_1)^+\},\Min_{w'\in R_1[w]} \{R_1^+(w,w')\rightharpoonup(w'\models\varphi_1)^-\}\big)\\
\preccurlyeq \; \; \; & \;\big(\Max_{v'\in R_1[v]} \{R_1^+(v,v')\sqcap(v'\models\varphi_1)^+\},\Min_{v'\in R_1[v]} \{R_1^+(v,v')\rightharpoonup(v'\models\varphi_1)^-\}\big)
\end{align*}
 Observe that, since $(w,v)\in S$, we have by induction hypothesis that, for any $w'\in R_1[w]$ with $R_1^+(w,w')\sqcap(w'\models\varphi_1)^+$, there is a $v'\in R_1[v]$ such that $R_1^+(w,w')\sqcap(w'\models\varphi_1)^+\; \; \leq \; \; R_1^+(v,v')\sqcap(v'\models\varphi_1)^-$. Hence, by monotonicity of $\sqcup$, 
$$\Max_{w'\in R_1[w]} \{R_1^+(w,w')\sqcap(w'\models\varphi_1)^+\; \; \leq \; \; \Max_{w'\in R_1[w]} \{R_1^+(w,w')\sqcap(w'\models\varphi_1)^+\}$$
\noindent
Analogously, one may prove that
\begin{displaymath}
\Min_{w'\in R_1[w]} \{R_1^+(w,w')\rightharpoonup(w'\models\varphi_1)^-\} \; \; \geq \; \; \Min_{v'\in R_1[v]} \{R_1^+(v,v')\rightharpoonup(v'\models\varphi_1)^-\}.\qedhere\end{displaymath}
\end{description}
\end{proof}

\noindent
Note there is a number of formulas which are not preserved by these mappings. Such are the cases of $\neg\varphi$, $\varphi_1\to\varphi_2$, $\openbox\varphi$, $\slashed{\openbox}\varphi$ and $\slashed{\Diamond}\varphi$. A systematic study of these cases, with counterexamples, can be found in \cite{Ana21}.

\begin{definition}
\label{bisimulation2} Let $\A$ be an iMTL-algebra enriched with a metric. 
Let $T_1=\langle W_1,R_1\rangle$ and $T_2=\langle W_2,R_2\rangle$ be two $\A$-PTS. A relation $B\subseteq W_1\times W_2$ is a bisimulation if for $\langle p,q\rangle\in B$:



$p\xrightarrow{(a,b)\text{ }}_{M_1}p' \Rightarrow \langle \exists q'\in W_2:q  \xrightarrow{(a,b)\text{ }}_{M_2}q' \text{ and } \langle p',q'\rangle\in B\rangle$ and

$q\xrightarrow{(a,b)\text{ }}_{M_2}q' \Rightarrow \langle \exists p'\in W_1:p  \xrightarrow{(a,b)\text{ }}_{M_1}p' \text{ and } \langle p',q'\rangle\in B\rangle$.
\end{definition}

\noindent
Two states $p$ and $q$ are \emph{bisimilar}, written $p\sim q$, if there is a bisimulation $B$ such that $\langle p,q\rangle \in B $.


\begin{definition} A relation $B\subseteq W_1\times W_2$ is a bisimulation between two $\A$-paraconsistent Kripke models $M_1=(W_1,R_1,V_1)$ and $M_2=(W_2,R_2,V_2)$ if 
\begin{itemize}
\item  $B$ is a bisimulation between the $\A$-PTS $(W_1,R_1)$ and $(W_2,R_2)$
\item for any $p\in\Prop$ and $(w_1,w_2) \in B$, $V_1(w_1,p) = V_2(w_2,p)$.
\end{itemize}
\end{definition}

\begin{lemma} Let 
$M_1=(W_1,R_1,V_1)$ and $M_2=(W_2,R_2,V_2)$ two $\A$-paraconsistent Kripke models for $\Prop$ and 
$B\subseteq W_1 \times W_2$ a bisimulation. Then, for any $(w,v)\in B$ and for any $\varphi \in \Fm(\Prop)$, 
$(w\models\varphi)=(v\models\varphi)$.
\end{lemma}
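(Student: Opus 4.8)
The plan is to prove the statement by structural induction on $\varphi\in\Fm(\Prop)$, with no strengthening of the hypothesis needed. For the atomic cases: if $\varphi=p\in\Prop$ then $(w\models p)=V_1(w,p)=V_2(v,p)=(v\models p)$ directly from the valuation clause of a bisimulation between Kripke models, while $\varphi=\bot$ is a constant independent of the model. For the Boolean connectives $\neg,\wedge,\vee,\rightarrow$, the value of $\varphi$ at a state is obtained by applying the corresponding twist-algebra operation ($\nneg,\doublesqcap,\doublesqcup,\Longrightarrow$) to the values of the immediate subformulas; since by the induction hypothesis those values agree at $w$ and $v$, so do the results. For $\circ\varphi$, the induction hypothesis gives $(w\models\varphi)=(v\models\varphi)$, so this pair lies in $\Delta_C$ when tested at $w$ iff it does when tested at $v$, and the case split returns the same pair on both sides.

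The substance of the argument is the four modal cases; I would treat $\openbox\varphi$ in detail and note that $\Diamond\varphi$, $\slashed\openbox\varphi$, $\slashed\Diamond\varphi$ are handled identically (the slashed ones merely swap the two components of the pair and use $R^-$ in place of $R^+$). By definition $(w\models\openbox\varphi)=(\boxplus(w,\varphi^+),\mydiamonplus(w,\varphi^-))$, whose first component is $\bigsqcap_{w'\in R[w]}\big(R^+(w,w')\rightharpoonup(w'\models\varphi)^+\big)$ and whose second is $\bigsqcup_{w'\in R[w]}\big(R^+(w,w')\sqcap(w'\models\varphi)^-\big)$. I would show that the family of lattice elements indexed by $R[w]$ and the one indexed by $R[v]$ have the same meet, and likewise the same join. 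For the nontrivial direction of the meet: given $w'\in R[w]$, the forth-condition of Definition~\ref{bisimulation2} supplies $v'\in R[v]$ with $R^+(w,w')=R^+(v,v')$, $R^-(w,w')=R^-(v,v')$ and $\langle w',v'\rangle\in B$; by the induction hypothesis $(w'\models\varphi)=(v'\models\varphi)$, so the $w'$-term coincides exactly with the $v'$-term. The back-condition yields the symmetric correspondence. Hence every term of one family equals a term of the other, so the two meets (and the two joins) are equal, and the first and second components of $(w\models\openbox\varphi)$ and $(v\models\openbox\varphi)$ agree. The degenerate situation $R[w]=\varnothing$ forces $R[v]=\varnothing$ (a transition out of $v$ would, by the back-condition, induce one out of $w$), so both sides collapse to the same constant $(\bigsqcap\varnothing,\bigsqcup\varnothing)=(1,0)$ and the claim still holds.

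The main obstacle is purely bookkeeping in these modal cases, and there are two points to get right. First, one must use that a bisimulation matches each transition by the \emph{whole} weight pair $(a,b)$, not merely up to $\preccurlyeq$; this is what makes $R^+(w,w')=R^+(v,v')$ and $R^-(w,w')=R^-(v,v')$ available and hence lets the residua and meets inside $\boxplus,\boxminus,\mydiamonplus,\mydiamonminus$ line up term by term. Second, \emph{both} directions of the bisimulation are genuinely required: a single direction would only give the inequality $\preccurlyeq$, exactly as in Lemma~\ref{lm:simprev}, rather than equality of the indexed meets and joins. Once this term-by-term correspondence is established, nothing about the iMTL-algebra beyond the definitions and monotonicity of $\sqcap,\sqcup,\rightharpoonup$ is used, and the induction closes uniformly over all the connectives, including the two pairs of modalities and $\circ$.
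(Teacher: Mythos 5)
Your proof is correct and follows essentially the same route as the paper: structural induction on formulas, with the Boolean, consistency, and atomic cases handled directly from the twist-algebra operations and the valuation clause, and the modal cases settled by matching transitions across the bisimulation. In fact your treatment of the modal cases (using both the forth and back conditions, the exact equality of weight pairs, and the empty-successor situation) is spelled out more carefully than the paper's own sketch, which simply asserts the equality of the indexed joins and meets by the induction hypothesis.
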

\begin{proof} 
The proof proceeds by induction on the structure of the formulas.
\begin{itemize}
\item \textbf{Case $\varphi=\bot$}:
by definition of $\models$, $(w\models\bot)=(0,1)=(v\models\bot)$.

\item \textbf{Case $\varphi=p$, $p\in Prop$}:
since by hypothesis $(w,v)\in B$, we have that $V_1(w,p)=V_1(v,p)$. Hence 
$(w\models p)=V_1(w,p)=V_1(v,p)=(v\models p)$.

\item \textbf{Case $\varphi=\neg\varphi_1$}:
$(w\models\neg\varphi_1)=((w\models\varphi_1)^-,(v\models\varphi_1)^+)=(v\models\neg\varphi_1)$, and, by induction hypothesis,
$(w\models\varphi_1)^+)= ((v\models\varphi_1)^-$.

\item \textbf{Case $\varphi=\varphi_1\wedge\varphi_2$}: the proof is analogous to the case in Lemma~\ref{lm:simprev}.

\item  \textbf{Cases} $\varphi=\varphi_1\vee\varphi_2$ and $\varphi=\varphi_1\to\varphi_2$: analogous to the previous case.

\item  \textbf{Case $\varphi=\Diamond\varphi_1$}: analogous to the proof of the case in Lemma~\ref{lm:simprev}.

\item \textbf{ Case $\varphi=\openbox\varphi_1$}: analogous to the previous case.

\item\textbf{Case $\varphi =\slashed{\openbox}\varphi_1$}:
Observe that 
$$(w\models\slashed{\openbox}\varphi_1)\; \; =\; \; (\Max_{w'\in W_1} (R^-(w,w')\sqcap(w'\models\varphi_1)^-),\Min_{w'\in W_1} (R^-(w,w')\rightharpoonup(w'\models\varphi_1)^+))$$
By induction hypothesis, this is equal to 
$$(\Max_{v\in W_2} (R^-(v,v')\sqcap(v'\models\varphi_1)^-),\Min_{v\in W_2} (R^-(v,v')\rightharpoonup(v'\models\varphi_1)^+)) \; \; =\; \; (v\models\slashed{\openbox} \varphi_1)$$

\item \textbf{Case $\varphi =\slashed{\Diamond}\varphi_1$}: is analogous to the previous case.
\item \textbf{Case $\varphi =\circ\varphi_1$}:
since $(w\models\varphi)=(v\models\varphi)$ it is straightforward that $(w\models\circ\varphi)=(v\models\circ\varphi)$
\qedhere
\end{itemize}
\end{proof}

\section{Conclusions and future work} 


We have discussed a method to ``build paraconsistent modal logics on-demand" (in style of \cite{MADEIRA20161011}), i.e. parametric in an  iMTL-algebra equipped with a metric space over its carrier. As explained, these results generalise the modal extension of Belnap-Dunn four-valued logic \cite{10.1093} to deal with more complex truth spaces. An earlier work by V. Goranko \cite{Gor90}, which 
discusses axiomatizations and completeness of bimodal logics with modalities associated with an accessibility relation and its complement,
bears an interesting correspondence with the structures proposed in this paper. In the particular, the case in which the weights labelling the positive and the negative components of the accessibility relation are complementary is discussed there.

A lot remains to be done. The main challenge is the development of a (also parametric)  proof calculus for $\logica{\A}$, and study its properties.
From the applications side, on the other hand, we intend to pursue its application to optimization of quantum circuits \cite{Ana21}. A realted work direction focus on the development of an algebra of $\A$-paraconsistent transition systems, to proceed towards more complex systmes in a compositional way. This will also open possibility of  developing  a dynamic logic for weighted  (e.g. probabilisitc, quantum) programs on top of this semantic framework and its algebra. We believe this may be relevant, namely for verification of quantum programs.


    

\paragraph{Acknowledgements.}
The authors are grateful to V. Goranko for calling attention to his earlier work \cite{Gor90} and helpful insights. 
The comments from the anonymous referees were also greatly appreciated. 

This work was funded in the context of project IBEX (\texttt{PTDC/CCI-COM/4280/2021}), by \textsc{Fct}, the Portuguese funding agency for Science and Technology.

\bibliographystyle{eptcs}
\bibliography{biblio}

\begin{thebibliography}{10}
\providecommand{\bibitemdeclare}[2]{}
\providecommand{\surnamestart}{}
\providecommand{\surnameend}{}
\providecommand{\urlprefix}{Available at }
\providecommand{\url}[1]{\texttt{#1}}
\providecommand{\href}[2]{\texttt{#2}}
\providecommand{\urlalt}[2]{\href{#1}{#2}}
\providecommand{\doi}[1]{doi:\urlalt{http://dx.doi.org/#1}{#1}}
\providecommand{\eprint}[1]{arXiv:\urlalt{https://arxiv.org/abs/#1}{#1}}
\providecommand{\bibinfo}[2]{#2}

\bibitemdeclare{book}{Aka16}
\bibitem{Aka16}
\bibinfo{editor}{Seiki \surnamestart Akama\surnameend}, editor
  (\bibinfo{year}{2016}): \emph{\bibinfo{title}{Towards Paraconsistent
  Engineering}}.
\newblock {\sl \bibinfo{series}{Intelligent Systems Reference Library}}
  \bibinfo{volume}{110}, \bibinfo{publisher}{Springer},
  \doi{10.1007/978-3-319-40418-9}.

\bibitemdeclare{article}{Almeida09}
\bibitem{Almeida09}
\bibinfo{author}{Agostinho \surnamestart Almeida\surnameend}
  (\bibinfo{year}{2009}): \emph{\bibinfo{title}{Canonical Extensions and
  Relational Representations of Lattices with Negation}}.
\newblock {\sl \bibinfo{journal}{Studia Logica}}
  \bibinfo{volume}{91}(\bibinfo{number}{2}), pp. \bibinfo{pages}{171--199},
  \doi{10.1007/s11225-009-9171-8}.

\bibitemdeclare{article}{Bou}
\bibitem{Bou}
\bibinfo{author}{F{\'e}lix \surnamestart Bou\surnameend},
  \bibinfo{author}{Francesc \surnamestart Esteva\surnameend},
  \bibinfo{author}{Llu{\'\i}s \surnamestart Godo\surnameend} \&
  \bibinfo{author}{Ricardo~Oscar \surnamestart Rodr{\'\i}guez\surnameend}
  (\bibinfo{year}{2009}): \emph{\bibinfo{title}{{On the Minimum Many-Valued
  Modal Logic over a Finite Residuated Lattice}}}.
\newblock {\sl \bibinfo{journal}{Journal of Logic and Computation}}
  \bibinfo{volume}{21}(\bibinfo{number}{5}), pp. \bibinfo{pages}{739--790},
  \doi{10.1093/logcom/exp062}.

\bibitemdeclare{book}{Carnielli2016}
\bibitem{Carnielli2016}
\bibinfo{author}{Walter \surnamestart Carnielli\surnameend} \&
  \bibinfo{author}{Marcelo~Esteban \surnamestart Coniglio\surnameend}
  (\bibinfo{year}{2016}): \emph{\bibinfo{title}{Paraconsistent Logic:
  Consistency, Contradiction and Negation}}.
\newblock \bibinfo{publisher}{Springer}, \doi{10.1007/978-3-319-33205-5}.

\bibitemdeclare{mastersthesis}{Ana21}
\bibitem{Ana21}
\bibinfo{author}{Ana Luzia~R. \surnamestart Cruz\surnameend}
  (\bibinfo{year}{2021}): \emph{\bibinfo{title}{Exploring paraconsistent logics
  for quantum programs}}.
\newblock Master's thesis, \bibinfo{school}{DI, Universidade do Minho}.

\bibitemdeclare{article}{EstevaFrancesc}
\bibitem{EstevaFrancesc}
\bibinfo{author}{Francesc \surnamestart Esteva\surnameend} \&
  \bibinfo{author}{Lluis \surnamestart Godo\surnameend} (\bibinfo{year}{2001}):
  \emph{\bibinfo{title}{Monoidal t-norm Based Logic: Towards a Logic for
  Left-continuous t-norms}}.
\newblock {\sl \bibinfo{journal}{Fuzzy Sets and Systems}}
  \bibinfo{volume}{124}, pp. \bibinfo{pages}{271--288},
  \doi{10.1016/S0165-0114(01)00098-7}.

\bibitemdeclare{incollection}{Gor90}
\bibitem{Gor90}
\bibinfo{author}{Valentin \surnamestart Goranko\surnameend}
  (\bibinfo{year}{1990}): \emph{\bibinfo{title}{Completeness and Incompleteness
  in the Bimodal Base $\mathscr{L}(R,-R)$}}.
\newblock In \bibinfo{editor}{Petio~P. \surnamestart Petkov\surnameend},
  editor: {\sl \bibinfo{booktitle}{Mathematical Logic}},
  \bibinfo{publisher}{Springer}, \bibinfo{address}{Boston}, pp.
  \bibinfo{pages}{311--326}, \doi{10.1007/978-1-4613-0609-2\_22}.

\bibitemdeclare{article}{js123}
\bibitem{js123}
\bibinfo{author}{Stanis{\l}aw \surnamestart Ja\'skowski\surnameend}
  (\bibinfo{year}{1969}): \emph{\bibinfo{title}{Propositional Calculus for
  Contradictory Deductive Systems (Communicated at the Meeting of March 19,
  1948)}}.
\newblock {\sl \bibinfo{journal}{Studia Logica}} \bibinfo{volume}{24}, pp.
  \bibinfo{pages}{143--160}, \doi{10.1007/BF02134311}.

\bibitemdeclare{article}{Kracht1998-KRAOEO}
\bibitem{Kracht1998-KRAOEO}
\bibinfo{author}{Marcus \surnamestart Kracht\surnameend}
  (\bibinfo{year}{1998}): \emph{\bibinfo{title}{On Extensions of Intermediate
  Logics by Strong Negation}}.
\newblock {\sl \bibinfo{journal}{Journal of Philosophical Logic}}
  \bibinfo{volume}{27}(\bibinfo{number}{1}), pp. \bibinfo{pages}{49--73},
  \doi{10.1023/A:1004222213212}.

\bibitemdeclare{article}{MADEIRA20161011}
\bibitem{MADEIRA20161011}
\bibinfo{author}{Alexandre \surnamestart Madeira\surnameend},
  \bibinfo{author}{Renato \surnamestart Neves\surnameend} \&
  \bibinfo{author}{Manuel \surnamestart A.Martins\surnameend}
  (\bibinfo{year}{2016}): \emph{\bibinfo{title}{An exercise on the generation
  of many-valued dynamic logics}}.
\newblock {\sl \bibinfo{journal}{Journal of Logical and Algebraic Methods in
  Programming}} \bibinfo{volume}{85}(\bibinfo{number}{5, Part 2}), pp.
  \bibinfo{pages}{1011--1037}, \doi{10.1016/j.jlamp.2016.03.004}.
\newblock \bibinfo{note}{Articles dedicated to Prof. J. N. Oliveira on the
  occasion of his 60th birthday}.

\bibitemdeclare{article}{Preskill18}
\bibitem{Preskill18}
\bibinfo{author}{John \surnamestart Preskill\surnameend}
  (\bibinfo{year}{2018}): \emph{\bibinfo{title}{Quantum Computing in the NISQ
  era and beyond}}.
\newblock {\sl \bibinfo{journal}{Quantum}} \bibinfo{volume}{2},
  p.~\bibinfo{pages}{79}, \doi{10.22331/q-2018-08-06-79}.

\bibitemdeclare{article}{10.1093}
\bibitem{10.1093}
\bibinfo{author}{Umberto \surnamestart Rivieccio\surnameend},
  \bibinfo{author}{Achim \surnamestart Jung\surnameend} \&
  \bibinfo{author}{Ramon \surnamestart Jansana\surnameend}
  (\bibinfo{year}{2015}): \emph{\bibinfo{title}{Four-valued modal logic: Kripke
  semantics and duality}}.
\newblock {\sl \bibinfo{journal}{Journal of Logic and Computation}}
  \bibinfo{volume}{27}(\bibinfo{number}{1}), pp. \bibinfo{pages}{155--199},
  \doi{10.1093/logcom/exv038}.

\end{thebibliography}

\end{document}